\title{MIMO Two-way Relay Channel:\\ Diversity-Multiplexing Tradeoff Analysis}
\author{%
  \authorblockN{Deniz G\"{u}nd\"{u}z\authorrefmark{1}\authorrefmark{2},
    Andrea Goldsmith\authorrefmark{2}, H. Vincent Poor\authorrefmark{1},
  }\\
  \authorblockA{%
    \authorrefmark{1}Department of Electrical Engineering, Princeton University, Princeton, NJ.\\
  }
  \authorblockA{%
    \authorrefmark{2}Department of Electrical Engineering, Stanford University, Stanford, CA.\\
  }
  Email: dgunduz@princeton.edu, andrea@wsl.stanford.edu, poor@princeton.edu
  \thanks{This research was supported by the National Science Foundation under Grants ANI-03-38807 and CNS-06-25637, the DARPA ITMANET program under Grant 1105741-1-TFIND, and the U.S. Army Research Office under MURI award W911NF-05-1-0246.}
}
\date{November, 2007}
\newtheorem{thm}{Theorem}[section]
\newtheorem{cor}[thm]{Corollary}
\newtheorem{prop}[thm]{Proposition}
\begin{document}
\maketitle
\thispagestyle{empty}
\pagestyle{empty}


\begin{abstract}
A multi-hop two-way relay channel is considered in which all the terminals are equipped with multiple antennas. Assuming independent quasi-static Rayleigh fading channels and channel state information available at the receivers, we characterize the optimal diversity-multiplexing gain tradeoff (DMT) curve for a full-duplex relay terminal. It is shown that the optimal DMT can be achieved by a compress-and-forward type relaying strategy in which the relay quantizes its received signal and transmits the corresponding channel codeword. It is noteworthy that, with this transmission protocol, the two transmissions in opposite directions can achieve their respective single user optimal DMT performances simultaneously, despite the interference they cause to each other. Motivated by the optimality of this scheme in the case of the two-way relay channel, a novel dynamic compress-and-forward (DCF) protocol is proposed for the one-way multi-hop MIMO relay channel for a half-duplex relay terminal, and this scheme is shown to achieve the optimal DMT performance.
\end{abstract}

\section{Introduction}

Relays have found applications in many wireless networks to enhance coverage, reliability and throughput. Following \cite{Sendonaris:03} and \cite{Laneman:2004} there has been a growing interest in developing cooperative relaying techniques for wireless systems. While one-way relaying has been widely considered in the literature, in most practical communication scenarios data flows in both directions. Hence, the relay can be used to improve the performance of both transmissions simultaneously. This pragmatic approach has been modeled as the \emph{two-way relay channel} in the literature and has attracted significant recent interest \cite{Rankov:Asil:05}, \cite{Popovski:ICC:06}, \cite{Knopp:IZS:06}. Although many involved transmission schemes have been proposed for communication over two-way relay channels \cite{Rankov:Asil:05}, \cite{Gunduz:Allerton:08}, \cite{Schnurr:ISIT:08}, the capacity region remains open.

In this paper, we consider a ``separated'' two-way relay channel (sTRC) \cite{Gunduz:Allerton:08}, in which the two users can receive signals only from the relay terminal (see Fig. \ref{f:system}). In practice, this corresponds to a scenario in which the users are physically separated and the signals received from each other are negligible, such as two distant land stations communicating with a satellite, or two mobile users located on opposite sides of a building communicating with the same base station on top of the building. When there is no direct connection between the two wireless terminals, relays are essential to enable communication.

We consider multiple antennas at each terminal and model the channels between the users as quasi-static, independent, frequency non-selective Rayleigh fading channels, and assume that perfect channel state information (CSI) is available only at the receivers. Our focus here is on the diversity-multiplexing tradeoff (DMT) analysis for the multiple-input multiple-output (MIMO) sTRC. DMT analysis, introduced in \cite{Zheng:IT:03}, is useful in characterizing the fundamental tradeoff between the reliability and the number of degrees-of-freedom of a system. We measure the reliability by the diversity gain, defined as the rate of decay of the error probability with increasing $\mathrm{SNR}$, and measure the degrees-of-freedom of the system by the spatial multiplexing gain, defined as the rate of increase in the transmission rate with $\mathrm{SNR}$. The optimal DMT of a point-to-point MIMO system is characterized in \cite{Zheng:IT:03}, and it is shown to be a piecewise linear function.

\psfrag{M1}{$M_1$}
\psfrag{M2}{$M_r$}
\psfrag{M3}{$M_2$}
\psfrag{H1}{$\mathbf{H}_1$}\psfrag{H2}{$\mathbf{H}_2$}
\psfrag{H3}{$\mathbf{H}_3$}\psfrag{H4}{$\mathbf{H}_4$}
\psfrag{S}[][.3]{\huge{$\mathrm{S}_1$}}
\psfrag{R}[][.3]{\huge{$\mathrm{R}$}}
\psfrag{D}[][.3]{\huge{$\mathrm{S}_2$}}
\begin{figure}
\centering
\includegraphics[width=3.5in]{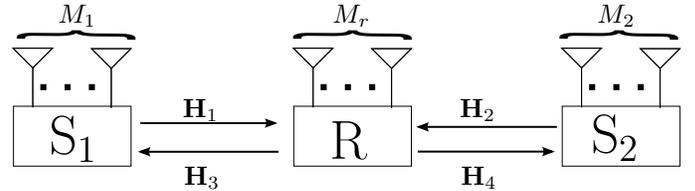}
\caption{The $(M_1, M_2, M_3)$  separated MIMO two-way relay channel model. There is no direct link between the user terminals $S1$ and $S2$.}
\label{f:system}
\end{figure}

When only one of the users is active, the sTRC model reduces to a MIMO multi-hop relay channel, for which the optimal DMT is characterized in \cite{Gunduz:Globecom:08}. For a full-duplex relay terminal the optimal DMT for this multi-hop setup is achievable by decode-and-forward (DF) relaying. On the other hand, if the relay is constrained to half-duplex operation, fixed time allocation schemes fall short of the optimal DMT performance, while the dynamic decode-and-forward (DDF) protocol, introduced in \cite{Azarian:IT:05}, can be shown to be DMT-optimal.

In this paper, we show that in sTRC the DF protocol fails to achieve the optimal DMT performance even in the case of a full-duplex relay. Enforcing the relay terminal to decode both messages limits the achievable multiplexing gains due to the additional sum-rate constraint. As an alternative, we consider a coding scheme based on compress-and-forward (CF) relaying \cite{Cover:1979} in that the relay does not decode the underlying messages of the users. Instead, the relay quantizes its output and broadcasts the quantized output to the users using the coding scheme in \cite{Gunduz:Allerton:08}, \cite{Schnurr:ISIT:08}. In this coding scheme, transmission of the relay's received signal to the users is considered as a lossy joint source-channel coding problem in which the receivers (the two users in our setup) have side information (their own transmit signals) correlated with the underlying source signal. This joint source-channel coding problem is studied in \cite{Nayak:IT:08}, \cite{Gunduz:ISIT:08} in detail from the achievable distortion perspective at the receivers.

In \cite{Gunduz:Allerton:08}, it is shown that this CF scheme achieves rates within a half bit of the capacity region in the Gaussian setting with single antenna terminals. Here we show that the CF protocol achieves the optimal DMT performance in the case of multiple antenna terminals. Note that, the DMT-optimality of the CF scheme is shown in \cite{Yuksel:IT:07} for multiple antenna relay channels and in \cite{Vaze:IT:08} for multiple antenna two-way relay channels. However, both of these works assume the availability of perfect channel state information at the transmitters which is then used to decide the optimal quantization noise covariance as well as the optimal relay listening time when the relay is half-duplex. Here we propose a transmission protocol that does not necessarily choose the optimal values for these variables, and hence does not require instantaneous channel state information at the transmitters, yet achieves the optimal DMT performance.

We also consider the multi-hop MIMO relay setup (equivalent to the sTRC with only one user transmitting), and propose a dynamic CF (DCF) protocol for a half-duplex relay terminal. We show that DCF achieves the optimal DMT performance characterized in \cite{Gunduz:Globecom:08}. This idea can also be used to extend the DMT analysis of sTRC to a half-duplex relay terminal.

The rest of the paper is organized as follows. In Section \ref{s:system} we introduce the system model. Section \ref{s:FD} is devoted to the DMT analysis of sTRC with a full-duplex relay terminal. We consider both the DF and the CF protocols and show that the optimal DMT is achieved by the CF protocol. In Section \ref{s:HD}, we propose a novel dynamic CF protocol for communication over multi-hop MIMO half-duplex relay channels, and prove that it is DMT-optimal. Our conclusions and appendices containing proofs of our results follow.

\section{System Model}\label{s:system}

We consider the sTRC composed of two users $S_1$ and $S_2$, with $M_1$ and $M_2$ antennas, respectively, and a relay terminal $R$ with $M_r$ antennas, as in Fig. \ref{f:system}. We call this an $(M_1,M_r,M_2)$ system. Over a block of $T$ symbols, the received signal at the relay is
\begin{eqnarray}
\mathbf{Y}_r &=& \sqrt{\frac{\mathrm{SNR}}{M_1}} \mathbf{H}_1 \mathbf{X}_1 + \sqrt{\frac{\mathrm{SNR}}{M_2}} \mathbf{H}_2 \mathbf{X}_2 + \mathbf{W}_r,
\end{eqnarray}
while the received signal at user $i$, $i=1,2$, is
\begin{eqnarray}
\mathbf{Y}_i &=& \sqrt{\frac{\mathrm{SNR}}{M_r}} \mathbf{H}_{i+2} \mathbf{X}_r + \mathbf{W}_i.
\end{eqnarray}
Here $W_r \in \mathds{C}^{M_r \times T}$ and $W_i \in \mathds{C}^{M_i \times T}$, $i=1,2$, are the additive noise components, whose entries are complex Gaussian random variables with independent and identically distributed (i.i.d.) zero mean, variance $1/2$ Gaussian real and imaginary components, denoted by $\mathcal{CN}(0,1)$. Channels are assumed to be frequency non-selective, quasi-static Rayleigh fading and independent of each other; that is, for $i=1,2$, $\mathbf{H}_i \in \mathds{C}^{M_r \times M_i}$ and $\mathbf{H}_{i+2} \in \mathds{C}^{M_i \times M_r}$ are the channel matrices with i.i.d. entries distributed as $\mathcal{CN}(0,1)$. We have short-term power constraints at the two users and at the relay given by $\mathrm{trace}(E[\mathbf{X}_i^H \mathbf{X}_i]) \leq M_i T$ for $i=1,2,r$.

We assume that the channel realization is perfectly known by the receiving end of each transmission. We also assume that the realizations of $\mathbf{H}_1$ and $\mathbf{H}_2$ are known by the users at the end of $T$ symbols. This can be accomplished by broadcasting at a small rate $\epsilon>0$ from the relay terminal the quantized versions of both channel coefficient matrices to the users with arbitrarily small error. It can be shown that this does not effect the DMT analysis and arbitrary low quantization noise variance can be achieved as $T$ goes to infinity. 

Following \cite{Zheng:IT:03}, for increasing $\mathrm{SNR}$ we consider a family of codes and say that the multiplexing gain of user $i$ is $r_i$ if the user's rate $R_i(\mathrm{SNR})$ satisfies, for $i=1,2$,
\[
\lim_{\mathrm{SNR} \rightarrow \infty} \frac{R_i(\mathrm{SNR})}{\log(\mathrm{SNR})} = r_i,
\]
and the corresponding diversity gain of user $i$ is $d_i$ if,
\[
d_i=-\lim_{\mathrm{SNR} \rightarrow \infty} \frac{\log P_e^i(\mathrm{SNR})}{\log(\mathrm{SNR})},
\]
in which $P_e^i(\mathrm{SNR})$ is the error probability of user $i$. In the rest of the paper, we consider codes with sufficiently long codewords so that the error event is dominated by the outage event. For each $(r_1,r_2)$ pair, we want to characterize the set of achievable diversity gain pairs $(d_1, d_2)$ denoted by $\mathcal{D}(r_1, r_2)$.


\section{DMT of MIMO {s}TRC with a Full-duplex Relay}\label{s:FD}


We start with an outer bound on the achievable diversity gain region $\mathcal{D}(r_1, r_2)$ for any pair of multiplexing gains $(r_1,r_2)$. Then we consider DF and CF protocols and compare their achievable performances with the outer bound.

\begin{prop}\label{p:outerbound}
For any given $(r_1, r_2)$, if $(d_1, d_2) \in \mathcal{D}(r_1, r_2)$  then we have $d_i \leq d_{M^*,M_r}(r_i)$ for $i=1,2$, where  $M^*\triangleq \min\{M_1,M_2\}$ and $d_{M,N}(r)$ is the optimal DMT for a point-to-point MIMO channel with $M$ transmit and $N$ receive antennas.
\end{prop}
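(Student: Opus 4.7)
The plan is to establish the bound via a genie-aided cut-set argument. To bound $d_1$, I would give every node in the network perfect non-causal knowledge of $\mathbf{X}_2$. Since such side information can only improve the achievable diversity for user $1$'s message, any $(d_1,d_2)\in\mathcal{D}(r_1,r_2)$ must still satisfy the DMT bound of the resulting genie-aided channel. After removing $\mathbf{X}_2$ at the relay, the genie-aided problem is exactly the one-way multi-hop MIMO relay channel from $S_1$ through $R$ to $S_2$ with antenna configuration $(M_1,M_r,M_2)$.

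Next, I would invoke the cut-set bound on this reduced one-way channel. Two cuts matter. The cut separating $\{S_1\}$ from $\{R,S_2\}$ gives a mutual information bound $I(\mathbf{X}_1;\mathbf{Y}_r,\mathbf{Y}_2\mid \mathbf{X}_r)$, which collapses to $I(\mathbf{X}_1;\mathbf{Y}_r\mid \mathbf{X}_r)$ because $\mathbf{Y}_2$ depends on $\mathbf{X}_1$ only through $\mathbf{X}_r$; this is the point-to-point $(M_1,M_r)$ MIMO capacity, yielding the DMT bound $d_{M_1,M_r}(r_1)$. The cut separating $\{S_1,R\}$ from $\{S_2\}$ gives $I(\mathbf{X}_1,\mathbf{X}_r;\mathbf{Y}_2)=I(\mathbf{X}_r;\mathbf{Y}_2)$, which is the $(M_r,M_2)$ point-to-point MIMO capacity, yielding the DMT bound $d_{M_r,M_2}(r_1)$. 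Under the long-block-length assumption in Section~\ref{s:system} the error probability is dominated by the outage probability, so these mutual information bounds translate into the DMT bound $d_1 \leq \min\{d_{M_1,M_r}(r_1),\, d_{M_r,M_2}(r_1)\}$. This is precisely the multi-hop MIMO DMT characterization of \cite{Gunduz:Globecom:08}, which can be cited directly.

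Finally, I would reduce the minimum to a single term using the structural properties of the point-to-point DMT from \cite{Zheng:IT:03}. Recall $d_{M,N}(r)$ is symmetric in its two arguments and is the piecewise-linear interpolation of the points $(k,(M-k)(N-k))$ for $k=0,1,\ldots,\min(M,N)$. Fixing one argument and increasing the other enlarges the domain of admissible $r$ and strictly increases the diversity at each integer multiplexing gain, hence $d_{\cdot,M_r}(r)$ is monotonically non-decreasing in its first argument. Therefore $\min\{d_{M_1,M_r}(r_1),\, d_{M_2,M_r}(r_1)\}=d_{\min(M_1,M_2),M_r}(r_1)=d_{M^*,M_r}(r_1)$, proving the bound on $d_1$. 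By interchanging the roles of $S_1$ and $S_2$ (and of $r_1,r_2$), the same argument delivers $d_2 \leq d_{M^*,M_r}(r_2)$.

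The main obstacle is really only conceptual rather than technical: one must be careful that providing $\mathbf{X}_2$ as a genie is legitimate and that the cut-set style mutual-information bound transfers cleanly into an outage (and thus DMT) bound in this quasi-static fading setting. Both of these steps are standard in the DMT literature, so the argument should go through without serious difficulty; the monotonicity step is immediate from the explicit formula $(M-k)(N-k)$.
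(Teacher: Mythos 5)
Your proposal is correct and follows essentially the same route as the paper: the paper's proof is a one-line cut-set argument treating each user separately with the other silent, yielding $d_1 \leq \min\{d_{M_1,M_r}(r_1), d_{M_r,M_2}(r_1)\} = d_{M^*,M_r}(r_1)$, which is exactly your genie-aided reduction to the one-way multi-hop channel followed by the two cuts and the monotonicity/symmetry simplification. You have simply spelled out the steps the paper leaves implicit.
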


\begin{proof}
This outer bound can be obtained easily from the usual cut-set bounds by considering each user separately, assuming the other one is silent. For example, for user 1, the DMT is bounded by
\begin{eqnarray*}
  d_1 &\leq&  \min\{d_{M_1,M_r}(r_1), d_{M_r,M_2}(r_1)\} \\
      &=& d_{M^*,M_r}(r_1).
\end{eqnarray*}
\end{proof}

\subsection{Decode-and-Forward Relaying}\label{ss:DF}

Here, we characterize the achievable DMT with DF relaying. Due to the joint decoding requirement at the relay terminal, the DMT becomes involved when we allow different diversity gains for the two users; thus similar to \cite{Tse:IT:04} we assume that the users have the same diversity gain requirement $d$. We define $\mathcal{R}_{DF}(d)$ as the set of achievable multiplexing gain pairs for which both users achieve a diversity gain of $d$.

From Proposition \ref{p:outerbound}, it easily follows that for any $(r_1,r_2)\in \mathcal{R}_{DF}(d)$, we have $r_i \leq r_{M^*,M_r}(d)$, where $r_{M,N}(d)$ is the multiplexing-diversity tradeoff curve for a point-to-point MIMO channel with $M$ transmit and $N$ receive antennas.

In DF relaying, we use a block Markov structure, that is, each user transmits its message in blocks, and at each channel block the relay forwards the messages it decoded from the previous channel block. This transmission scheme is a combination of a multiple-access phase and a broadcast phase, which take place simultaneously for consecutive data blocks. In the broadcast phase, the relay uses the coding scheme first introduced in \cite{Tuncel:IT:06} (see also \cite{Knopp:IZS:06}, \cite{Oechtering:IT:08}) for broadcasting a source to two receivers with correlated side information. Here we consider decoded messages at the relay as the source signal, and each user's own message as the correlated side information. The achievable multiplexing gain region by DF is found as follows.

\begin{thm}\label{t:fdDF}
$\mathcal{R}^{FD}_{DF}(d) = \{(r_1, r_2): 0\leq r_i \leq r_{M^*,M_r}(d), i=1,2, \mbox{ and } r_1+r_2 \leq r_{M_1+M_2,M_r}(d) \}$.
\end{thm}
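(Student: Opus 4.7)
The plan is to establish both achievability and a matching converse for the stated DF region. For achievability, I employ the standard block-Markov coding scheme: over $B$ channel blocks, user $i$ sends $B-1$ independent messages $m_i^{(b)}$ at rate $R_i$; the relay jointly decodes $(m_1^{(b)},m_2^{(b)})$ from its block-$b$ observation and, in block $b+1$, re-encodes the pair using the Tuncel-type broadcast code of \cite{Tuncel:IT:06}, in which each user exploits its own message as correlated side information. Because the relay is full-duplex, the multiple-access and broadcast transmissions proceed concurrently every block, so an achievable rate pair must simultaneously satisfy the outage constraints arising from both phases; the block-Markov rate loss $1/B$ vanishes as $B\to\infty$.

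The core step is to translate each phase's outage event into DMT constraints. For the MAC phase at the relay, the MIMO-MAC DMT analysis yields that the joint-decoding outage probability is dominated at high SNR by the worst of three events, producing the constraints $r_1\le r_{M_1,M_r}(d)$, $r_2\le r_{M_2,M_r}(d)$, and $r_1+r_2\le r_{M_1+M_2,M_r}(d)$; the sum-rate bound reflects that, with both users active, the relay sees an effective $(M_1+M_2)\times M_r$ Wishart channel. For the broadcast-with-side-information phase, user 1 decodes $m_2$ from its observation knowing $m_1$, which reduces to a point-to-point MIMO outage event on the relay-to-user-1 link and gives outage exponent $d_{M_r,M_1}(r_2)$; symmetric reasoning applies to user 2. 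Hence this phase imposes $r_2\le r_{M_r,M_1}(d)$ and $r_1\le r_{M_r,M_2}(d)$.

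Intersecting the two phases and using the single-user MIMO DMT symmetry $r_{M,N}(d)=r_{N,M}(d)$, the per-user bound becomes $r_i\le\min\{r_{M_i,M_r}(d),r_{M_j,M_r}(d)\}=r_{M^*,M_r}(d)$ for $j\neq i$, while the MAC sum-rate constraint carries over unchanged, yielding the claimed region. For the converse, the individual bound $r_i\le r_{M^*,M_r}(d)$ is already supplied by Proposition \ref{p:outerbound} as a universal cut-set outer bound, and the sum-rate bound is intrinsic to DF: the relay, by definition of the protocol, must reliably recover the pair $(m_1,m_2)$ from its own observation, so any $(r_1,r_2)$ outside the MIMO MAC capacity region triggers the relay-decoding outage at a diversity order strictly below $d$.

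The principal technical obstacle is the asymmetric MIMO-MAC outage analysis, since the DMT region for general $(r_1,r_2)$ is a polytope carved by three boundaries and one must justify $P(\mathcal{O})\doteq\max\{P(\mathcal{O}_1),P(\mathcal{O}_2),P(\mathcal{O}_{12})\}$ and then invert each exponent into its multiplexing-gain constraint. A secondary subtlety is verifying that the broadcast-with-side-information code of \cite{Tuncel:IT:06}, originally formulated for discrete memoryless channels, attains its promised rate region in the quasi-static Rayleigh setting without transmit CSI, so that each direction of the broadcast inherits the exact single-user MIMO outage exponent.
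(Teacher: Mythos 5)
Your proposal is correct and follows essentially the same route the paper indicates for its (omitted) proof: block-Markov DF whose multiple-access phase is analyzed via the MIMO-MAC DMT of \cite{Tse:IT:04} and whose broadcast-with-side-information phase reduces to per-link point-to-point outage exponents, combined with the cut-set bound of Proposition \ref{p:outerbound} and the relay-decoding requirement for the converse. Your intersection of the two phases' constraints using the symmetry $r_{M,N}(d)=r_{N,M}(d)$ correctly reproduces the stated region.
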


We skip the proof of the theorem due to space limitations, but the analysis of the multiple access phase follows similarly to \cite{Tse:IT:04}, while the broadcast phase follows similarly to the analysis of the CF scheme that will be given below. Note that, compared to the outer bound, the achievable multiplexing gain region in Theorem \ref{t:fdDF} has an additional constraint due to the decoding requirement at the relay. Hence, these two bounds do not meet in general.

\begin{cor}\label{c:fdDF}
For a given common diversity gain $d$, if
\begin{equation}\label{opt_DF}
r_{M^*,M_r}(d) \leq \frac{1}{2} r_{M_1+M_2,M_r}(d)
\end{equation}
then DF relaying achieves the optimal multiplexing gain region of the full-duplex MIMO sTRC.
\end{cor}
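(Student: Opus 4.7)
The plan is to show that under the hypothesis \eqref{opt_DF}, the achievable region given by Theorem \ref{t:fdDF} coincides with the outer bound of Proposition \ref{p:outerbound}, so there is nothing more to prove beyond combining the two. The three ingredients available are: (i) the individual outer bound $r_i \leq r_{M^*,M_r}(d)$ for $i=1,2$, (ii) the DF achievable constraints $r_i \leq r_{M^*,M_r}(d)$ and $r_1+r_2 \leq r_{M_1+M_2,M_r}(d)$, and (iii) the assumption that $r_{M^*,M_r}(d) \leq \tfrac{1}{2} r_{M_1+M_2,M_r}(d)$.

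First I would observe that the individual per-user constraints in the outer bound and in the DF region are identical, so the only way the DF region can fail to be tight is through the sum-rate constraint $r_1+r_2 \leq r_{M_1+M_2,M_r}(d)$. Second, I would note that for any pair $(r_1,r_2)$ lying inside the outer-bound region we automatically have
\begin{equation*}
r_1 + r_2 \;\leq\; 2\, r_{M^*,M_r}(d).
\end{equation*}
Third, the hypothesis \eqref{opt_DF} immediately gives $2\,r_{M^*,M_r}(d) \leq r_{M_1+M_2,M_r}(d)$, so the DF sum-rate constraint is implied by the per-user outer bounds and hence is redundant. Therefore every point in the outer-bound region is already contained in $\mathcal{R}_{DF}^{FD}(d)$, and the converse direction is the outer bound itself; the two regions coincide.

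There is no real obstacle here beyond correctly identifying that the per-user and sum-rate constraints of the DF region must be compared against the per-user outer bound; the rest is a one-line inequality chain. If I wanted to be pedantic, I would also remark that the region in Theorem \ref{t:fdDF} is convex (it is an intersection of halfspaces), so checking the implication on the extreme point $(r_{M^*,M_r}(d), r_{M^*,M_r}(d))$ is in fact what the hypothesis \eqref{opt_DF} amounts to, and redundancy of the sum-rate constraint at this extreme point ensures redundancy everywhere in the outer-bound rectangle.
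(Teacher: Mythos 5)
Your argument is correct and is essentially the paper's own reasoning: the outer bound is the square $[0,r_{M^*,M_r}(d)]^2$, and hypothesis \eqref{opt_DF} makes the DF sum-rate constraint redundant at (and hence throughout) that square, so the achievable region of Theorem \ref{t:fdDF} meets the outer bound of Proposition \ref{p:outerbound}. The paper states this only in the one-line remark following the corollary (``the optimal operating point is the corner point''), so your write-up is just a slightly more explicit version of the same inequality chain.
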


Since the optimal multiplexing gain region is a square, the optimal operating point is the corner point where both users achieve a multiplexing gain of $r_{M^*,M_r}(d)$. Also notice that, for a given system, optimality of DF is achieved for all diversity gains up to $d^*$ for which (\ref{opt_DF}) is satisfied with equality. Hence, for high diversity gains, or equivalently for low multiplexing gains, DF achieves the optimal performance for each user, in which case the transmission in either direction takes place as if the other user is silent. This is similar to the conclusion in \cite{Tse:IT:04} for multiple access channels, in which case each user achieves the optimal single-user DMT for a lightly-loaded regime of a limited number of users. Also note that DF relaying does not require any additional channel knowledge other than the CSI at the receivers.

\subsection{Compress-and-Forward Relaying}\label{ss:CF}

Now we consider a compress-and-forward (CF) transmission protocol at the relay. In this protocol, the relay broadcasts a quantized version of its received signal to the users. Since each user knows the signal it transmitted in the previous block, they have access to correlated side information. This, similar to the DF protocol, can be seen as a joint source-channel coding problem with correlated side information at the receivers. The difference is that we want lossy reconstruction at the receivers instead of lossless transmission \cite{Nayak:IT:08}, \cite{Gunduz:ISIT:08}.

In \cite{Gunduz:Allerton:08}, \cite{Schnurr:ISIT:08}, coding schemes for the sTRC based on the joint source-channel coding approach are studied. Note that the proposed coding scheme is different from the classical CF relaying in \cite{Cover:1979}, where the relay applies separate source and channel coding to its received signal. In the case of a Gaussian sTRC, it is shown in \cite{Gunduz:Allerton:08} that the proposed CF scheme achieves within a half bit of the capacity region. Here we consider this scheme with joint decoding at the receivers, that is, each receiver directly decodes the message from the received signal and its own transmitted codeword without explicitly decoding the quantized relay signal. This joint decoding scheme is proposed in \cite{Sanderovich:IT:08} for communication with decentralized processing and in \cite{Schnurr:ISIT:08} for the sTRC.

\begin{thm}\label{t:fdCF}
For given multiplexing gains $r_1$ and $r_2$, the optimal DMT for the sTRC with a full-duplex relay is characterized by
\[\mathcal{D}(r_1, r_2) = \{(d_1, d_2): 0\leq d_i\leq d_{M^*,M_r}(r_i), i=1,2\},\]
and this optimal DMT is achieved by CF relaying.
\end{thm}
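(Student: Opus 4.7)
The converse $d_i \leq d_{M^*, M_r}(r_i)$ follows immediately from Proposition \ref{p:outerbound}, so my plan is to exhibit a coding scheme that achieves $d_i = d_{M^*, M_r}(r_i)$ simultaneously for $i=1,2$. The scheme I will analyze is the one sketched in the paragraph preceding the theorem: each user transmits an i.i.d.\ $\mathcal{CN}(0,1)$ Gaussian codebook of rate $r_i\log\mathrm{SNR}$; at each block the relay forms the quantized signal $\hat{\mathbf{Y}}_r = \mathbf{Y}_r + \mathbf{Q}$ with $\mathbf{Q}$ having i.i.d.\ $\mathcal{CN}(0,\sigma_q^2)$ entries, and broadcasts the resulting quantization index using the joint source--channel construction of \cite{Tuncel:IT:06}; each user performs joint typicality decoding of the partner's message directly from $(\mathbf{Y}_i,\mathbf{X}_i)$ and the broadcast observation, as in \cite{Sanderovich:IT:08}, \cite{Schnurr:ISIT:08}.

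The analysis of the joint decoder gives, for user $1$ decoding $\mathbf{X}_2$, the rate condition
\[
r_2\log\mathrm{SNR} \leq I(\mathbf{X}_2;\hat{\mathbf{Y}}_r,\mathbf{Y}_1\mid \mathbf{X}_1,\mathbf{H})
\]
together with a Tuncel-type broadcast-admissibility constraint
\[
I(\mathbf{Y}_r;\hat{\mathbf{Y}}_r\mid \mathbf{X}_1,\mathbf{Y}_1,\mathbf{H}) \leq I(\mathbf{X}_r;\mathbf{Y}_1\mid \mathbf{H}),
\]
and the symmetric pair at user $2$. The key structural observation is that $\mathbf{X}_i$ is known exactly at user $i$, so after subtracting its own contribution from $\hat{\mathbf{Y}}_r$ the first condition reduces to a $\log\det$ for a point-to-point $(M_2,M_r)$ MIMO channel, while $\mathbf{Y}_1$ is independent of $\mathbf{X}_2$ given $\mathbf{X}_1$ and contributes only an $O(1)$ term in the SNR exponent; the second condition reduces to a $\log\det$ for the $(M_r,M_1)$ relay-to-user-$1$ link. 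In particular, since each user's own signal acts as perfect side information, the two directions decouple in the SNR exponent: $r_1$ places no constraint on $(d_2,r_2)$ and vice versa, which is precisely what makes simultaneous single-user optimality possible.

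From there the DMT calculation is a union bound. The error event at user $1$ is contained in the union of the outage event of the $(M_2,M_r)$ user-to-relay channel at multiplexing gain $r_2$ and the outage event of the $(M_r,M_1)$ broadcast channel at the effective quantization rate, up to a constant additive loss that is absorbed into the $\doteq$ equivalence. Each is a point-to-point MIMO outage, so \cite{Zheng:IT:03} yields exponents $d_{M_2,M_r}(r_2)$ and $d_{M_r,M_1}(r_2)$, whose minimum is $d_{M^*,M_r}(r_2)$; the same argument at user $2$ completes the achievability.

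The main obstacle will be the choice of the quantization variance $\sigma_q^2$ without CSIT. A constant $\sigma_q^2$ inflates the effective receiver noise by only an $O(1)$ factor and therefore costs nothing in the DMT of the first condition, but it may force the broadcast-admissibility quantity to grow faster than the relay-to-user link can support on typical (non-outage) channel realizations. To preserve the DMT of the second condition, I expect to let $\sigma_q^2$ scale polynomially in $\mathrm{SNR}$ with an exponent depending only on $(r_1,r_2)$, and then to verify, via the eigenvalue-scaling calculus of \cite{Zheng:IT:03}, that a single such universal choice simultaneously keeps both conditions outside single-user MIMO outage events at both users. Establishing that this can be done with transmitter knowledge of only $(r_1,r_2)$---and checking that no joint-decoding error events beyond the two identified single-user outages contribute to the dominant exponent---is the delicate step; the rest is standard bookkeeping over $\doteq$.
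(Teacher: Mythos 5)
Your overall architecture matches the paper's: the converse from Proposition \ref{p:outerbound}, achievability by the quantize-and-broadcast CF scheme with each user's own codeword as side information, and a union bound over two single-user MIMO outage events whose exponents combine to $\min\{d_{M_2,M_r}(r_2),\,d_{M_r,M_1}(r_2)\}=d_{M^*,M_r}(r_2)$. The gap is in the decoding condition you write down and in the step you defer. The admissibility constraint $I(\mathbf{Y}_r;\hat{\mathbf{Y}}_r\mid\mathbf{X}_1,\mathbf{Y}_1,\mathbf{H})\leq I(\mathbf{X}_r;\mathbf{Y}_1\mid\mathbf{H})$ is the \emph{separate-decoding} (Tuncel-type) condition, under which user $1$ must recover the quantization codeword itself. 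Its left-hand side still contains user $2$'s signal and scales like $\min\{M_2,M_r\}\log\mathrm{SNR}$ for any $O(1)$ quantization variance, while the right-hand side scales like $\min\{M_1,M_r\}\log\mathrm{SNR}$; for an asymmetric configuration such as $(M_1,M_r,M_2)=(1,2,2)$ the constraint is therefore violated on \emph{typical} (non-outage) realizations, and this formulation yields outage probability tending to one. This is exactly the difficulty you flag at the end, and your proposed remedy --- letting $\sigma_q^2$ grow polynomially in $\mathrm{SNR}$ with an exponent depending only on $(r_1,r_2)$ --- trades quantization rate against the effective SNR of the $\mathbf{X}_2\to\hat{\mathbf{Y}}_r$ link; you have not verified that one exponent can serve both conditions and both users, and in the $(1,2,2)$ example the required $\sigma_q^2\doteq\mathrm{SNR}^{1/2}$ visibly degrades the first condition, so the argument does not close as stated.

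The paper avoids this entirely: it keeps $\mathbf{Q}\sim\mathcal{CN}(0,\mathbf{I})$ fixed and uses the \emph{joint}-decoding achievable rate of \cite{Sanderovich:IT:08}, \cite{Schnurr:ISIT:08}, in which the quantization-rate penalty is conditioned on \emph{both} users' codewords (the decoder's own signal and the message hypothesis being tested). That residual term describes only the relay noise plus quantization noise and is an SNR-independent constant, so the two error events reduce to $\{R_1>[C_4-O(1)]^+\}$ and $\{R_1>\check{C}_1\}$ with $\check{C}_1\doteq C_1$, and the unit-variance quantizer is simultaneously adequate for both receivers with no CSIT and no dependence on $(r_1,r_2)$. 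You cite the right joint-decoding references but then analyze the more restrictive separate-decoding error events; replacing your admissibility constraint with the jointly-decoded penalty term is the missing idea, and once it is in place the ``delicate step'' disappears rather than requiring an SNR-scaled quantizer.
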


\begin{proof}
A sketch of the proof of the theorem is given in Appendix \ref{a:fdCF}.
\end{proof}

Consider for example the symmetric multiplexing gain scenario with $r_1=r_2=r$. In this case, each user can achieve a diversity gain of $d_{M^*,M_r}(r)$. Consequently, compared to a system with one-way communication the CF protocol provides the same reliability while doubling the total number of degrees-of-freedom. An important question that we are currently exploring is whether this result is scalable with the number of relays or with the number of users. Note that the DF scheme can be extended to multiple relays or multiple users easily, while the proposed CF scheme requires forwarding of the CSI over the network which will increase the complexity significantly with the increasing number of relays. Furthermore, the performance of CF depends heavily on the fact that each user, knowing its own transmission, can receive a good description of the other user's signal from the quantized relay received signal. As the number of users increases this advantage will disappear due to the inevitable interference among multiple transmissions from the remaining users.

\section{Multi-hop MIMO Channel with a Half-duplex Relay}\label{s:HD}

In channels with half-duplex relays, it has been shown in \cite{Gunduz:Globecom:08} and \cite{Azarian:IT:05} that fixed time allocation between relay's listen and transmit modes is suboptimal. A dynamic DF protocol (DDF) is proposed in \cite{Azarian:IT:05} which is shown in \cite{Gunduz:Globecom:08} to achieve the optimal DMT for multi-hop MIMO relay systems. In DDF, the relay listens to the source transmission until it decodes the message, and then starts forwarding it to the destination. Our goal here is to develop a dynamic CF (DCF) protocol that achieves the optimal DMT performance in the multi-hop scenario. This is a first step towards extending our results in Section \ref{ss:CF} to sTRC with a half-duplex relay. For the analysis, we consider the same system model as in Section \ref{s:system} with $R_2=0$, i.e., only user 1 is transmitting.

In the DDF scheme, the amount of time the relay remains in the listen mode is naturally decided by the time the relay accumulates enough mutual information to be able to decode the message. Then the relay reencodes and forwards the message to the destination. However, in the CF protocol, since the relay does not decode the message, it is not clear when it should start forwarding. If, replicating DDF, the relay listens until the accumulated mutual information is exactly sufficient to decode the message, and then forwards a quantized version of its received signal to the destination, this surely will not be sufficient for decoding at the destination due to the data processing inequality.

In the DCF protocol we propose, at each channel realization, the relay listens until it could have decoded a message of rate $R_1+1$ rather than the rate of the message $R_1$. If $t(\mathbf{H}_1)$ is the random variable denoting the time the relay is in the listen mode, we have
\[
t(\mathbf{H}_1) = \frac{1 + R_1}{C_1(\mathbf{H}_1)}.
\]
The relay then quantizes its received signal up to that point, and transmits a channel codeword corresponding to this quantization codeword similarly to the coding scheme for full-duplex sTRC in Section \ref{ss:CF}. Since the relay knows the message rate and the realization of $\mathbf{H}_1$, it can decide on the listening time dynamically at each channel realization. We have the following result whose proof is sketched in Appendix \ref{a:multihop}.

\begin{thm}\label{t:multihop}
The DCF protocol is DMT-optimal for multi-hop MIMO half-duplex relay channels.
\end{thm}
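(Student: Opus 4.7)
The plan is to match the SNR exponent of the DCF outage probability against the half-duplex multi-hop DMT upper bound of \cite{Gunduz:Globecom:08}. That upper bound follows from the half-duplex cut-set bound $R_1\leq\max_{\tau\in[0,1]}\min\{\tau C_1,(1-\tau)C_2\}=C_1 C_2/(C_1+C_2)$, where I write $C_1=C_1(\mathbf{H}_1)$ for the source--relay mutual information and $C_2$ for the analogous relay--destination mutual information (over $\mathbf{H}_4$). Introducing the Zheng--Tse singular-value-exponent parameterisation of the two hops and letting $\mu$ and $\nu$ denote the high-SNR exponents of $C_1$ and $C_2$ (so $C_1\doteq\mu\log\mathrm{SNR}$ and $C_2\doteq\nu\log\mathrm{SNR}$), the outer-bound outage event at $R_1=r\log\mathrm{SNR}$ reduces to $\{\mu\nu<r(\mu+\nu)\}$; the goal is to show the DCF outage event has the same exponential behaviour.

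First, I would fix the signalling: i.i.d.\ $\mathcal{CN}(\mathbf{0},\mathbf{I})$ Gaussian codebooks at the source and the relay, and a noise-level Gaussian quantizer $\hat{\mathbf{Y}}_r=\mathbf{Y}_r+\mathbf{Z}_Q$ with $\mathbf{Z}_Q\sim\mathcal{CN}(\mathbf{0},\mathbf{I})$. Under the DCF rule $t(\mathbf{H}_1)=(1+R_1)/C_1$, three conditions guarantee successful joint-typicality decoding at the destination: (a) $t(\mathbf{H}_1)\leq 1$, i.e.\ $C_1\geq R_1+1$; (b) the $t(\mathbf{H}_1)\,T\cdot I(\mathbf{Y}_r;\hat{\mathbf{Y}}_r)$ bits describing the quantized observation fit inside the $(1-t(\mathbf{H}_1))\,T\cdot C_2$ bits carried by the relay--destination link; and (c) $R_1\leq t(\mathbf{H}_1)\,I(\mathbf{X}_1;\hat{\mathbf{Y}}_r)$ so that $\mathbf{X}_1$ is decodable from $\hat{\mathbf{Y}}_r$. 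For the chosen Gaussian design, both $I(\mathbf{Y}_r;\hat{\mathbf{Y}}_r)$ and $I(\mathbf{X}_1;\hat{\mathbf{Y}}_r)$ equal $C_1$ up to an $O(1)$ constant, so (c) is automatic from the definition of $t(\mathbf{H}_1)$, and (a),(b) together collapse to $(C_1-1-R_1)\,C_2\geq(1+R_1)\,C_1$. Dividing by $(\log\mathrm{SNR})^2$ and letting $\mathrm{SNR}\to\infty$ gives $(\mu-r)\nu\geq r\mu$, equivalently $\mu\nu\geq r(\mu+\nu)$, so the DCF outage event is exponentially equivalent to the outer-bound one.

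With the outage events aligned, the exponent of $P\{\mu\nu<r(\mu+\nu)\}$ is evaluated by a standard Laplace-type argument against the joint density of the $\alpha_i$'s and $\beta_j$'s from \cite{Zheng:IT:03}, using independence of $\mathbf{H}_1$ and $\mathbf{H}_4$. This yields the infimum of the sum of the two Zheng--Tse exponent functions over the region $\{\mu\nu<r(\mu+\nu)\}$, which is by construction the optimal half-duplex multi-hop DMT proved in \cite{Gunduz:Globecom:08}. Together with the converse from the same reference, this proves the theorem.

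The main obstacle I expect is justifying the ``$+1$'' in $t(\mathbf{H}_1)$. With the naive DDF-style choice $t=R_1/C_1$ one would have $R_1/t=C_1$, so any strictly positive quantization distortion forces $I(\mathbf{X}_1;\hat{\mathbf{Y}}_r)<R_1/t$ and precludes decoding at the destination. The ``$+1$'' opens a margin $C_1-R_1/t=C_1/(1+R_1)=O(1)$ that exactly absorbs the $M_r\log(1+1/N_Q)=O(1)$ quantization penalty while being negligible compared to $R_1=r\log\mathrm{SNR}$, so no diversity is lost at the DMT scale. Making this bookkeeping rigorous---so that the ``$+1$'' is simultaneously necessary (for any CF-type relay) and asymptotically free of any diversity cost---is the delicate technical step.
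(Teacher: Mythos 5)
Your proposal is correct and follows essentially the same route as the paper's Appendix B: the same listening rule $t=(1+R_1)/C_1$, the same three outage events (relay cannot stop listening in time, relay--destination link cannot carry the description, source--relay link insufficient for decoding from $\hat{\mathbf{Y}}_r$), the same reduction to the event $r_1 > S_1 S_4/(S_1+S_4)$, and the same match against the outer bound of \cite{Gunduz:Globecom:08}. The only cosmetic difference is that you impose the explicit ``description fits in the relay--destination pipe'' condition $t\,I(\mathbf{Y}_r;\hat{\mathbf{Y}}_r)\leq \bar{t}\,C_4$ where the paper uses the joint-decoding condition $R_1\leq \bar{t}\,C_4-O(t)$; since $t\,I(\mathbf{Y}_r;\hat{\mathbf{Y}}_r)=t\,C_1+O(1)=1+R_1+O(1)$, the two coincide up to additive constants and give the same exponent (and your closing caveat about making the $O(1)$ bookkeeping around the ``$+1$'' rigorous is a looseness the paper's own proof shares).
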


This DCF protocol, to our knowledge, is the first dynamic protocol based on CF relaying, and does not require CSI at the transmitters.


\section{Conclusions}

We have considered a multi-hop MIMO relay network with two-way data transmission. The capacity region for this system remains open, and other than some special cases, no known transmission strategy can simultaneously provide the users with their single user rates due to their mutual interference. We have analyzed this system in terms of the achievable diversity-multiplexing tradeoff. Interestingly, for a full-duplex relay, we have shown that the two users can simultaneously achieve their optimal DMT performance as if they were the only transmitter in the system; hence we conclude that the two users do not interfere with each other (in terms of the DMT performance).

We have shown that the optimal DMT performance is achieved by a compress-and-forward strategy that does not require the instantaneous channel state information at the transmitters. We have then proposed a dynamic version of this scheme, called the dynamic compress-and-forward (DCF) for multi-hop MIMO relay channels, and showed that this scheme achieves the optimal DMT performance. We are currently working on the performance of the DCF protocol for multi-hop MIMO two-way half-duplex relay networks.

\appendices

\section{Proof of Theorem \ref{t:fdCF}} \label{a:fdCF}

We assume Gaussian codebooks for both the channel coding and the quantization at the relay. Let $R_i=r_i \log \mathrm{SNR}$, for $i=1,2$, and define
\begin{equation}
C_i(\mathbf{H}_i) \triangleq \log \left| \mathbf{I} + \frac{\mathrm{SNR}}{M_i} \mathbf{H}_i\mathbf{H}_i^\dag \right|,
\end{equation}
for $i=1,\ldots,4$. We supress the dependence on $\mathbf{H}_i$'s for notational convenience.

At the end of each channel block, the relay first quantizes its received signal. Let
\[\hat{\mathbf{Y}}_r \triangleq \mathbf{Y}_r + \mathbf{Q},\]
where $\mathbf{Q} \in \mathds{C}^{M_r \times 1}$ is a zero-mean complex Gaussian vector with identity covariance matrix, i.e., $\mathbf{Q} \sim \mathcal{CN}(0, \mathbf{I})$. In the coding scheme, $2^{n(I(\mathbf{Y}_r;\hat{\mathbf{Y}}_r)+\epsilon)}$ quantization codewords and channel codewords are generated with distribution $P_{\hat{\mathbf{Y}}_r}$ and $P_{\mathbf{X}_r}$, respectively. For each received signal the relay finds the jointly typical quantization codeword and transmits the corresponding channel codeword. Each receiver directly decodes other user's message from its received signal and its own transmitted codeword through joint typicality decoding. See \cite{Schnurr:ISIT:08} for details of the decoding scheme and the probability of error analysis.

Consider decoding at user 2. The message of user 1 can be decoded successfully if,
\begin{align}
R_1 \leq \min \{ & [I(\mathbf{X}_r; \mathbf{Y}_2) - I(\mathbf{Y}_r; \hat{\mathbf{Y}}_r | \mathbf{H}_1,  \mathbf{H}_2, \mathbf{X}_2)]^+, \nonumber \\
& I(\mathbf{X}_1; \hat{\mathbf{Y}}_r | \mathbf{H}_1, \mathbf{H}_2, \mathbf{X}_2)  \},
\end{align}
where $[x]^+ \triangleq \max\{0,x\}$. Note here that both channel states $\mathbf{H}_1$ and $\mathbf{H}_2$ are assumed to be perfectly known at user 1 when decoding. The quantized channel states with arbitrarily small quantization noise variances can be broadcast from the relay together with the quantized received signal.

Then using the union bound, the outage probability for decoding at user 2 can be bounded as
\begin{align}\label{pout1}
 P_{out}^1 \leq \mathrm{Pr} \{R_1 > [C_4 -1]^+ \} +  \mathrm{Pr} \{R_1 > \check{C}_1 \} ,
\end{align}
where we defined, for $i=1,2$,
\begin{eqnarray}\label{d:cCapH}
\check{C}_i(\mathbf{H}_i) \triangleq \log \left|\mathbf{I} + \frac{\mathrm{SNR}}{2M_i} \mathbf{H}_i\mathbf{H}_i^\dag \right|.
\end{eqnarray}

Let $\lambda_{i,1}, \ldots, \lambda_{1,M_i^*}$ be the nonzero eigenvalues of $\mathbf{H}_i\mathbf{H}_i^\dag$, and suppose $\lambda_{i,j} =\mathrm{ SNR}^{-\alpha_{i,j}}$ for $j=1,\ldots, M_i^*$, $i=1,\ldots,4$. We have $M_3^* = M_1^* = \min\{M_1, M_r\}$ and $M_4^* = M_2^* = \min\{M_2, M_r\}$. Then we have\footnote{Define the exponential equality as $f(\mathrm{SNR}) \dot{=} \mathrm{SNR}^c$, if $\lim_{\mathrm{SNR} \rightarrow \infty} \frac{\log f(\mathrm{SNR})}{\log \mathrm{SNR}} = c$. The exponential inequalities $\dot{\leq}$ and $\dot{\geq}$ are defined similarly.}

\begin{eqnarray}\label{e:DDFout1}
C_i(\mathbf{H}_i) &= & \log \prod_{j=1}^{M_i^*} \left(1+\frac{\mathrm{SNR}}{M_i} \lambda_{i,j} \right) \nonumber \\
 &\doteq & \log \prod_{j=1}^{M_i^*} \mathrm{SNR}^{(1-\alpha_{i,j})^+}. \label{e:DDFout2}
\end{eqnarray}
We also have $\check{C}_i(\mathbf{H}_i) \doteq C_i(\mathbf{H}_i)$. Using these exponential equalities, we can obtain the following from (\ref{pout1}).
\begin{align}
P_{out}^1 \doteq & \mathrm{Pr} \left\{ r_1 \log \mathrm{SNR} > \log \mathrm{SNR}^{S_4(\boldsymbol{\alpha_4})} \right\} \nonumber \\
& + \mathrm{Pr} \left\{ r_1 \log \mathrm{SNR} > \log \mathrm{SNR}^{S_1(\boldsymbol{\alpha_1})} \right\}  \nonumber \\
\doteq & \mathrm{SNR}^{-d_{M_r, M_2} (r_1)} + \mathrm{SNR}^{-d_{M_1, M_r} (r_1)} \nonumber\\
\doteq & \mathrm{SNR}^{-d_{M^*, M_r} (r_1)} \nonumber
\end{align}
where we have defined $S_i(\boldsymbol{\alpha_i}) \triangleq \sum_{j=1}^{M_i^*} (1-\alpha_{i,j})^+$ for $i=1,\ldots,4$, and used the result from \cite{Zheng:IT:03} for the second exponential equality. The same arguments can be similarly applied to the other message as well. Comparing this achievable DMT of the CF protocol with the outer bound in Proposition \ref{p:outerbound} concludes the proof.


\section{Proof of Proposition \ref{t:multihop} \label{a:multihop}}
Due to space limitations, we give only a sketch of the proof here. The relay listens for $tT$ symbols with $t = \frac{R_1 + 1}{C_1}$. We have outage if $t\geq 1$. In the remaining $\bar{t}T$ symbols, with $\bar{t}\triangleq 1-t$, the relay quantizes the received symbols and transmits the corresponding channel codeword as in Appendix \ref{a:fdCF}.

The probability of outage can be bounded as follows.
\begin{align}
P_{out}^1 \leq & \mathrm{Pr} \left\{t > 1 \right\} + \mathrm{Pr} \left\{R_1> \bar{t} C_4 - t \right\}  \nonumber \\
& + \mathrm{Pr} \left\{R_1> t \check{C}_1  \right\} \\
     = &  \mathrm{Pr} \left\{R_1> \left(1-\frac{1+R_1}{C_1}\right)C_4 - \frac{1+R_1}{C_1} \right\}   \nonumber \\
    & + \mathrm{Pr} \left\{R_1 > C_1 -1 \right\} +  \mathrm{Pr} \left\{R_1> \frac{1+R_1}{C_1} \check{C}_1  \right\} \\
    = & \mathrm{Pr} \left\{R_1> \frac{C_1C_4-C_4-1}{1+C_1+C_4} \right\}+ \mathrm{Pr} \left\{R_1 > C_1-1 \right\} \nonumber \\
    & +  \mathrm{Pr} \left\{R_1 > \frac{\check{C}_1}{C_1-\check{C}_1}   \right\} \nonumber \\
    \dot{\leq} & \mathrm{Pr} \left\{R_1> \frac{C_1C_4-C_4-1}{1+C_1+C_4} \right\} +  \mathrm{Pr} \left\{R_1 > C_1 -1   \right\} \nonumber
\end{align}
where the last inequality follows since $C_1 > \check{C}_1 \geq C_1-1$.



Then we have
\begin{align}
P_{out}^1 \dot{\leq} & \mathrm{Pr} \left\{R_1 > \frac{C_1 C_4}{C_1+C_4} \right\} + \mathrm{Pr} \left\{R_1 > C_1 \right\} \nonumber\\
    \doteq &  \mathrm{Pr} \left\{R_1 > \frac{C_1 C_4}{C_1+C_4} \right\}  \nonumber\\
    \doteq & P\left\{r_1 \log \mathrm{SNR} > \frac{\log \mathrm{SNR}^{S_1(\boldsymbol{\alpha_1})} \log \mathrm{SNR}^{S_4(\boldsymbol{\alpha_4})}}{\log \mathrm{SNR}^{S_1(\boldsymbol{\alpha_1})} + \log \mathrm{SNR}^{S_4(\boldsymbol{\alpha_4})}} \right\} \nonumber
\end{align}
\begin{align}
\doteq & P\left\{r_1 > \frac{S_1(\boldsymbol{\alpha_1}) S_4(\boldsymbol{\alpha_4})} {S_1(\boldsymbol{\alpha_1}) + S_4(\boldsymbol{\alpha_4})} \right\} \nonumber
\end{align}
where, as before, we have $S_i(\boldsymbol{\alpha_i}) \triangleq \sum_{j=1}^{M_i^*} (1-\alpha_{i,j})^+$.

Now, comparison of this outage probability expression with the outer bound given in \cite{Gunduz:Globecom:08} reveals that the two match, and we achieve the optimal DMT performance. We do not have a general explicit expression for the optimal DMT of a MIMO multi-hop relay system as in the case of point-to-point MIMO channels \cite{Zheng:IT:03}; however the tradeoff can be obtained numerically from the optimization problem given in \cite{Gunduz:Globecom:08}.

\bibliographystyle{ieeetran}
\bibliography{refasil}

\end{document}